\newcommand*{\ImagWidth}{\linewidth}
\newtheorem{Theorem}{Theorem}
\newtheorem{Proposition}{Proposition}
\begin{document}

\title{On Modeling Coverage and Rate of Random Cellular Networks under Generic Channel Fading}

\author{Akram~Al-Hourani, Sithamparanathan~Kandeepan}

\institute{A. Al-Hourani, S. Kandeepan  \at
            School of Electrical and Computer Engineering, RMIT University, Melbourne, Australia.\\
            \email{akram.hourani@ieee.org, kandeepan@ieee.org}\\
            The final publication is available at Springer via http://dx.doi.org/10.1007/s11276-015-1114-x
}



\markboth{Coverage and Rate of Cellular Networks}%
{Shell \MakeLowercase{\textit{et al.}}: Bare Demo of IEEEtran.cls for Journals}

\maketitle

\begin{abstract}
In this paper we provide an analytic framework for computing the expected downlink coverage probability, and the associated data rate of cellular networks, where base stations are distributed in a random manner. The provided expressions are in computable integral forms that accommodate generic channel fading conditions. We develop these expressions by modelling the cellular interference using stochastic geometry analysis, then we employ them for comparing the coverage resulting from various channel fading conditions namely Rayleigh and Rician fading, in addition to the fading-less channel. Furthermore, we expand the work to accommodate the effects of random frequency reuse on the cellular coverage and rate. Monte-Carlo simulations are conducted to validate the theoretical analysis, where the results show a very close match.
\keywords{Stochastic geometry \and  cellular network modelling \and  coverage probability \and  network rate}
\end{abstract}

\section{Introduction}
The vast deployment scale of cellular communication has made it as one of the most ubiquitously available piece of infrastructure. The notable expansion rate of cellular networks is referred to the accelerating demand generated by mobile users, where network operators are endeavouring to bridge the gap between traffic load and the available network capacity by deploying additional base stations (BS). It is anticipated that within the next 5 years a data-traffic growth of around 10 folds will take place in cellular networks alone \cite{Cite_Ericsson}.

The locations of the deployed base stations, are usually constrained by many factors such as economical, urban planning codes, and the availability of land/utility etc. these factors are very difficult to control and to predict, which leads to an increasing randomness in the BS locations, where the theoretical hexagonal model is no longer feasible \cite{Japan_Nakata,Andrews_Capacity}. Due to the increasing complexity of cellular network, designers and researchers utilize simulation tools for predicting network coverage and performance. Such approach is widely accepted in the industry, however it can not give an analytical insight of the influence contributed by the vast simulation parameters. Rather, it provides a detailed case-specific solution with neither tractability nor flexibility.

Analytical insight of network dynamics is an essential enabler for strategic planning and long-term economical modelling \cite{Old_Baccelli}. And in order to capture the increasing irregularity of the network deployment, stochastic geometry models \cite{Book_Haenggi,Book_Stoyan,Book_Kingman,Book_Baccelli_1,Book_Baccelli_2} are recently gaining a paramount interest for studying wireless cellular networks. Stochastic geometry allows the analytical understanding of the performance of modern cellular technologies such as cognitive radios \cite{Book_Kandeepan}, heterogeneous networks, fractional frequency reuse \cite{FFR_Novlan} and device to device communications \cite{Ekram_D2D, Akram_D2D}, in addition to the fundamental coverage and capacity of the cellular network \cite{Andrews_Capacity,Andrew_Uplink,Japan_Ginbre,Japan_Nakata}. The most popular assumption in the literature for the radio power fading is the Rayleigh channel, where the distribution function of the received power takes a simple exponential shape. This assumption allows tractability and enormously simplifies the computation of expressions. However, the desired link might favour a better performance than Rayleigh model that is usually considered as the worst case scenario fading \cite{Book_Molisch}.

Understanding this gap in the literature, we propose in this paper an analytical approach for studying the coverage and the data rate of cellular networks under generic channel fading conditions, accommodating not only the fast-fading behaviour of the channel but also the possible effects of shadowing variation. We first model the cellular interference in a random cellular network, and then we study the expected performance metrics as spatially averaged over the entire network. In addition we demonstrate the effects on the coverage probability resulting from different channel fading scenarios namely; (i) fading-less channel, (ii) Rayleigh channel and (iii) Rician channel. We verify our analytical approach using Monte-Carlo simulations by running repeated random network deployments and obtaining the spatial average of the signal-to-noise-plus-interference ratio (SINR). These results are used to validate the analytical calculations obtained using the integral forms. The contribution of the paper could be summarized in the following points:
\begin{itemize}
  \item It provides a generic formula (the \emph{coverage equation}) for calculating the expected service success probability (or the \emph{coverage probability}) in random cellular networks, under generic channel fading conditions.
  \item The \emph{coverage equation} is flexible to allow different fading models for the serving signal from one side and the interfering signals from the other side.
  \item The paper provides a practical method to compute the expected data rate in a random cellular network, without resorting to complicated simulations.
\end{itemize}

The rest of this paper is structured as the following: in Section \ref{Sec_Related_Work} we provide a literature background and identify the key studies that our work is based upon. In Section \ref{Sec_Model} we build the network model and illustrate the implemented channel models. The analytical study of the cellular interference is presented in Section \ref{Sec_Interference}, while in Section \ref{Sec_Coverage} we derive the \emph{coverage equation} that describes the probability of successful communication in a computable integral form, we employ this equation for studying three different radio channels in Section \ref{Sec_Scenarios}. Section \ref{Sec_Capacity} explains our approach in estimating the network data rate. A comparison with Monte-Carlo simulations is provided in Section \ref{Sec_Simulation}. Finally in Section \ref{Sec_Conclution} we draw our conclusion remarks and the prospective research paradigms.

\section{Related Work}\label{Sec_Related_Work}
Several recent introductory works are available on stochastic geometry in the context of wireless networks \cite{Book_Haenggi,Haenggi_Tutorial,ElSawy_Survey}. However, some of the earliest work on this regards dates back to the 1970-1990 such as \cite{Old_Kleinrock,Old_Baccelli,Old_Sousa}. Since then, several leaps have taken place, for example the work in \cite{Win_Interference} draws a mathematical framework for the statistical distribution of the interference generated by random wireless networks, where in our derivation of cellular interference we follow a similar approach, but taking into consideration the specific properties of cellular networks. 
Other works related to interference can be found in \cite{Haenggi_Interference} addressing slotted ALOHA interference topic assuming a Rayleigh fading channel, while the authors in \cite{Cite_Baccelli_Aloha} address a general fading channel and obtain the optimum transmission probability in slotted ALOHA network. Also the work in \cite{6527333} addresses inter-user interference in an extensive experiment using both slotted and unslotted CSMA/CA. In the context of interference, the work in \cite{Andrew_Interference} addresses clustered interferers but it is distinguished by addressing the amplitude and phase of the interference where the interfering signals can interact constructively or detractively, while that the vast of the literature deals with the aggregated power of the interferers, that is the algebraic sum of the power of all interfering signals. In \cite{Zhang_Haenggi_Interference} the authors employ stochastic geometry analysis on studying intercell interference coordination (ICIC) by muting the transmission from $K$ number of neighbouring stations on specific resource blocks.

The tractability facilitated by the Poisson point process (PPP) attracts researches to represent the BS locations according to this process. However, other studies in this field capture the possible repulsion between base stations, utilizing determinantal point processes \cite{Japan_Ginbre, Japan_Nakata}. The accuracy of PPP is proven to increase when heavy shadowing conditions affect the network \cite{Blaszczyszyn_PPP}, making PPP a valid assumption in most of practical network deployment scenarios.

Applying stochastic geometry for studying cellular communication is an appealing approach for what it can yield of analytical estimations of the different attributes affecting such networks. For example the work in \cite{Andrews_Capacity} addresses the probability of coverage in cellular networks assuming a Rayleigh fading channel affecting the serving signal, while the work in \cite{Dhillon_KTier} extends the same approach for multi-tier heterogeneous cellular network, where base stations are implemented with different power and capacity levels. The work in \cite{DiRenzo} provides a mathematical framework to compute the expected cellular data rate without the need to obtain the coverage probability, the utilized method depends on the moment generating function of the interference. 

The main difference in our work presented here, is that the coverage and capacity estimations can be obtained for any stochastic fading channel model, with the freedom to select different stochastic processes for the serving signal and for the interfering ones.

\section{Network Model}\label{Sec_Model}
Achieving high accuracy in estimating network performance requires system level simulation, usually performed for a specific wireless technology that is implemented on a deterministic geometrical environment. Such simulations are quite useful in practical deployments of networks, for example, when wireless operators are deploying a new service, or when they are upgrading their infrastructure. However, an analytical tractable approach is preferred to get an insight of the different factors contributing to the network performance \cite{Andrews_Capacity, Cite_Baccelli_Aloha, Japan_Ginbre, Japan_Nakata, Akram_Energy_D2D, Akram_D2D}. These factors include (but are not limited to) base stations density, fading models, and resource allocation. Although tractable analysis can be achieved using simplified approximation of the deterministic hexagonal models \cite{Cite_Stuber_Book}, the drawback is that these models are very simplistic and might not reflect the true behaviour of the network. Practical cellular network deployments include vast randomness in the location of base stations, which cannot be captured by deterministic models. From this perspective, stochastic geometry is widely used in the research field to model the random location of base stations, where for simplicity it is common to assume a homogeneous PPP to model the BS locations. Accordingly, and in order to preserve the tractability we adopt the PPP network model assuming homogeneous BS intensity of value $\lambda$ (BS per unit area). The point process itself is denoted as $\Phi=\{X_n \in \mathbb{R}^2\}_{n \in \mathbb{N}}$ and is assumed to take place in the two dimensional Euclidean space $\mathbb{R}^2$. Mobile users are typically associated to the BS of highest received power, which is characterised with random behaviour (fast fading and shadowing). Thus, the cellular boundaries are rather probabilistic due to the random effect of fading. However, taking aside the fading effects, we can draw the average cellular boundary of each of the PPP points, simply by taking its Voronoi cell, defined as the region where all users are closer to the serving BS from any other BS \cite{Book_Haenggi}:
\begin{equation} \label{Eq_Voronoi_Cells}
V(X_n) \stackrel{\bigtriangleup}{=} \{u \in \mathbb{R}^2 : ||X_n-u|| \leq ||X_i-u|| \, \forall X_i \in \Phi \setminus \{X_n\} \},
\end{equation}
where $V(X_n)$ is the Voronoi cell of a base station $X_n$, and $\Phi$ is the set of base stations. This structure of the cellular system is called the Poisson Voronoi Tessellation (PVT) \cite{Book_Stoyan}. We depict a sample realization of a PVT layout in Fig. \ref{Fig_Layout}.
\begin{figure}
  \centering
  \includegraphics[width=\ImagWidth]{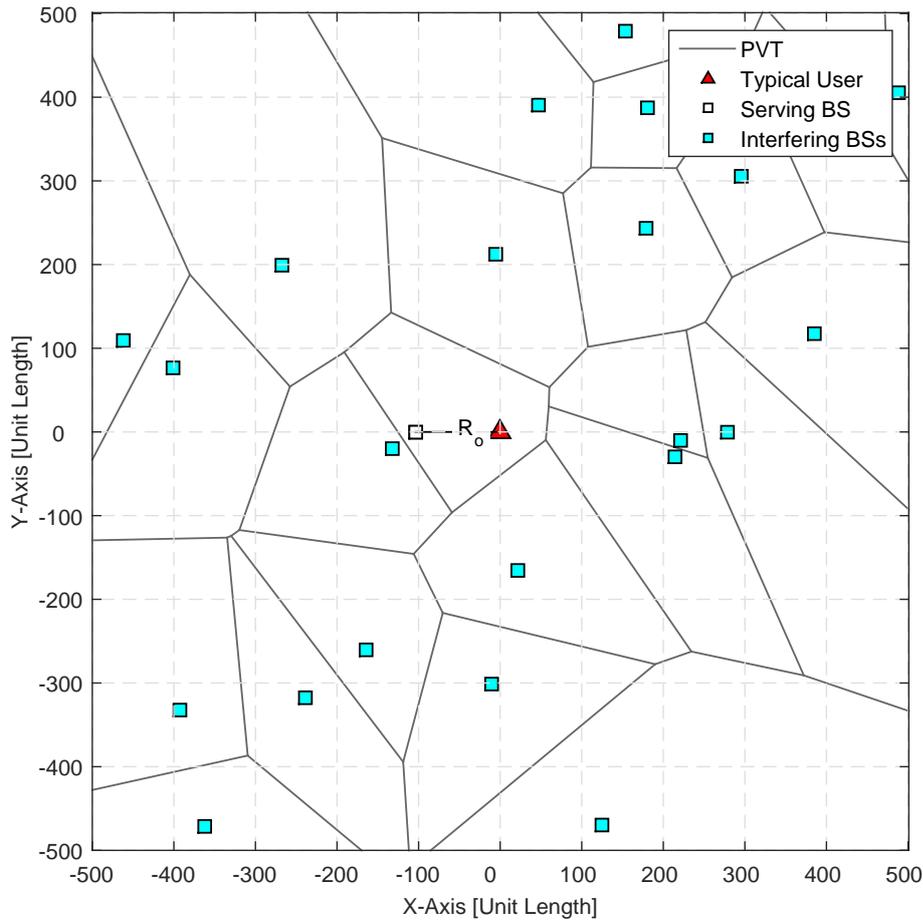}\\
  \caption{Cellular Poisson Voronoi tessellation, indicating the typical user located at the origin. }\label{Fig_Layout}
\end{figure}

The reason that we can rely on PVT for determining the cell association is that the mean path-loss is monotonically increasing with respect to distance, so having a closer distance to a certain BS will result is a better \emph{average} received power, than any other BS.


We consider a \emph{homogeneous network} with all base stations having the same transmit power $P_o$, rather than a multi-tier heterogeneous network composed of base stations of variable power capabilities. No power control is accounted in our model, so that base stations are assumed to continuously transmit at a constant power level. Without loss of generality, we study a user located at the origin, where its statistical behaviour is typical for all other users in the network,  we call this mobile user as the \emph{typical user}, where we estimate its network performance for all possible spatial realizations of the random network. In other words, we implicitly assume a homogeneous distribution of network users, so that having the performance of the typical user will reflect the spatial average of all users in the network.


\subsection{Channel Model}
Electromagnetic signals travelling between a BS and a receiver encounter power losses due to the propagation nature of the surrounding environment, resulting mainly from scattering, diffraction, reflection and absorption. These power losses are characterized with random behaviour and usually categorized into two distinct groups according to their rate of change, namely \emph{fast fading} and \emph{slow fading} or shadowing, where the random effect of these two categories is independent. Slow fading results mainly from the electromagnetic shadowing of obstacles, the random behaviour of the slow fading is modelled here as a random variable denoted as $\boldsymbol{g}$, where it is widely accepted to be considered to follow a log normal distribution according to the following:
\begin{equation}
\boldsymbol{g}= \exp\left(\sigma \boldsymbol{N}\right) : \boldsymbol{N} \sim \mathcal{N}(0,1)
\end{equation}
and $\sigma=\frac{\ln(10)}{10}\sigma_\mathrm{dB}$ represents the standard deviation, usually $\sigma_\mathrm{dB}$ is provided, representing the standard deviation in Decibel. The Gaussian distribution of zero mean and unity standard deviation is denoted as $\mathcal{N}(0,1)$.

On the other hand, the random interaction of multipath components at the receiver has a fast varying nature, causing rapid changes in the signal power. This effect is termed as the \emph{fast fading} and modelled here by a generic random variable $\boldsymbol{h}$, that can represent any of the common fast fading channel models, such as Rayleigh, Rician and m-Nakagami \cite{Book_Saakian}.

The mean loss due to the distance (the \emph{path-loss}) is modeled in a log-distance relation \cite{Book_Molisch}\cite{ITU_Channels}, so that at a location $x$ the mean path-loss between the origin and $x$ is given by the following:
\begin{equation}\label{Eq_Pathloss}
l(x)\stackrel{\bigtriangleup}{=}||x||^{-\alpha},
\end{equation}
where $||.||$ represents the Euclidean measure in $\mathbb{R}^2$, i.e. the distance between a source base station and the mobile station under study. Accordingly the resulting received power at a certain location will have the following expression:
\begin{align}
P_\mathrm{RX} = P_o . \boldsymbol{g} . \boldsymbol{h} . l(x),
\end{align}
where $P_o$ represents the common power at which all base stations are transmitting. Note that in this paper we represent random variables in \textbf{bold} for convenience and ease of interpretation.

\section{Modeling Cellular Interference}\label{Sec_Interference}
In our model, we assume that the BSs have a unity reuse factor. That is, for the typical mobile user, all BSs except the serving one are interfering with the downlink signal. Then the aggregated interference is given by
\begin{equation}
\boldsymbol{I} = \sum_{\Phi_\mathcal{I}} P_o  \boldsymbol{g_n} \boldsymbol{h_n} l(x) = \sum_{\Phi_\mathcal{I}} \boldsymbol{P_n} l(\boldsymbol{R_n}),
\end{equation}
where $\Phi_\mathcal{I}=\Phi\setminus \{X_o\}$ is the set of interferers, $X_o$ represents the serving BS, and $\{\boldsymbol{P_n}\}_{n \in \mathbb{N}^+}$ is a random variable vector having identical and independently distributed (i.i.d) elements, so that $\boldsymbol{P}=P_o\boldsymbol{g}\boldsymbol{h}$. The BSs' distances $\{\boldsymbol{R_n}\}_{n \in \mathbb{N}^+}$ constitute a random vector.

The illustration of a typical receiver located at the origin is shown in Fig. \ref{Fig_Typical}, where it is important to note that according to the assumed cellular association, all interfering BSs should be located outside the ball $b(o, \boldsymbol{R_o})$ of radius $\boldsymbol{R_o}$ and centred at the origin, where $\boldsymbol{R_o}$ is the distance to the serving BS, which is the contact distance to $\Phi$.

\begin{figure}
  \centering
  \includegraphics[width=\ImagWidth]{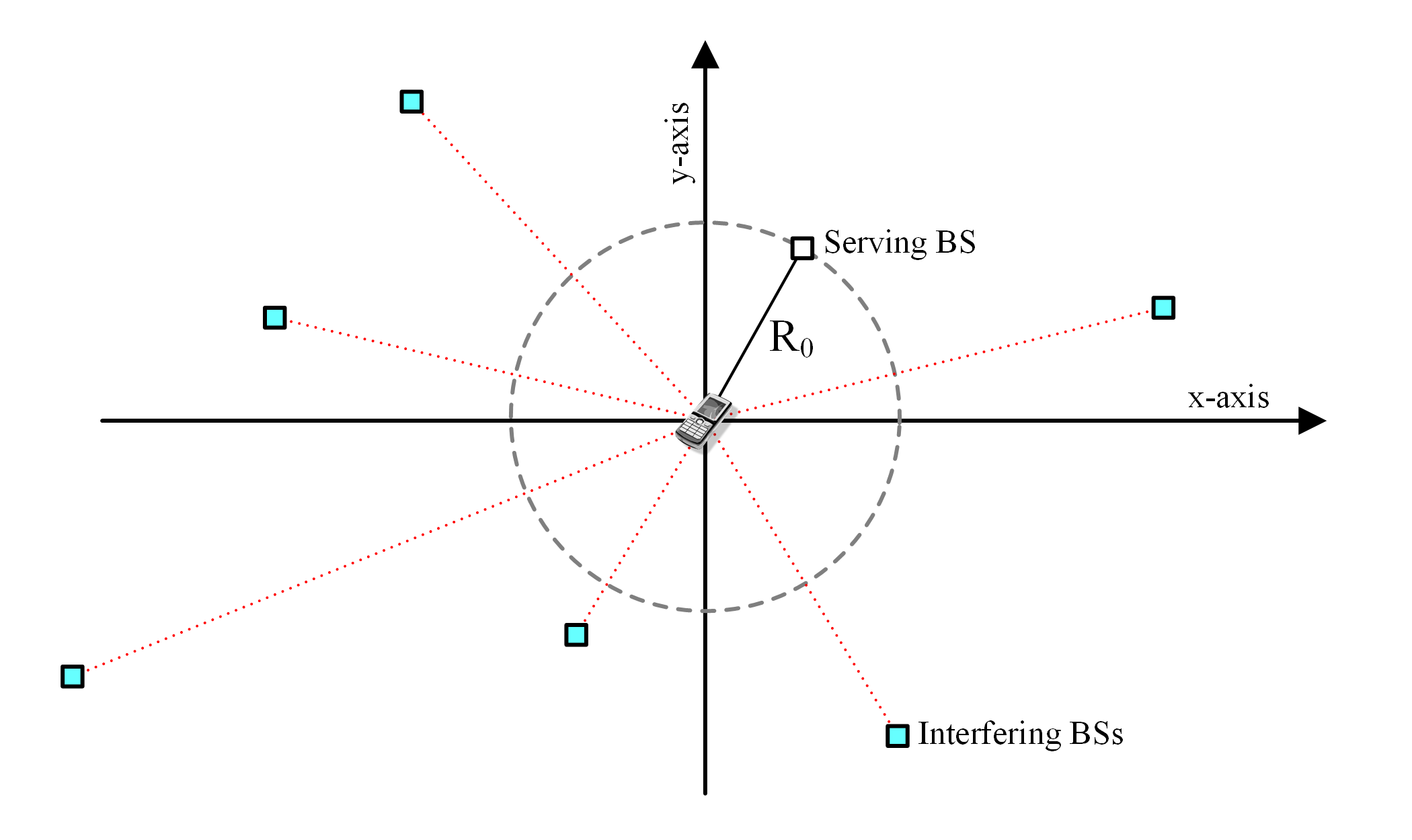}
  \caption{All interfering base stations are located outside the ball $b(o, \boldsymbol{R_o})$. }\label{Fig_Typical}
\end{figure}

The statistical distribution of the interference cannot be obtained for a generic case \cite{Win_Interference}; however, we can still deduce its Characteristic Function (CF) $\phi_{\boldsymbol{I}}(\omega)$, where the CF for a random variable $\boldsymbol{X}$ is defined as $\varphi_{\boldsymbol{X}}(\omega) = \mathbb{E} \left[  e^{\jmath\omega\boldsymbol{X}}  \right]$, where $\jmath=\sqrt{-1}$. If the characteristic function is identified, we can utilize the Gil-Pelaez's inversion theorem \cite{Inversion_Theorem} to compute the Cumulative Distribution Function (CDF) of $\boldsymbol{X}$ according to the following:
\begin{equation}\label{Eq_Inversion_Theorem}
F_{\boldsymbol{X} } (x) = \frac{1}{2} - \frac{1} {\pi} \int_0^\infty \frac{1} {\omega} \mathrm{Im}  \left[ \varphi_{\boldsymbol{X}}(\omega) \exp(-\jmath\omega x) \right] \mathrm{d}\omega.
\end{equation}

\begin{Proposition}
The characteristic function of the aggregated interference in a cellular network is given by:
\begin{equation}
\varphi_{\boldsymbol{I}}(\omega) =  \exp\left(     - 2\pi \lambda \beta \right), \text{ where}
\end{equation}
\begin{equation}\label{Eq_Beta}
\beta(\omega)=\int_{\boldsymbol{R_o}}^{\infty} \left[ 1- \varphi_{ \boldsymbol{P} } \left(\omega l(r) \right) \right] r\mathrm{d}r.
\end{equation}
\end{Proposition}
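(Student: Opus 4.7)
The plan is to compute the characteristic function directly from its definition by exploiting the independence structure of the PPP together with the probability generating functional (PGFL) of the Poisson process. Starting from
\[
\varphi_{\boldsymbol{I}}(\omega) = \mathbb{E}\!\left[e^{\jmath\omega \boldsymbol{I}}\right] = \mathbb{E}\!\left[\prod_{X_n \in \Phi_{\mathcal{I}}} e^{\jmath \omega \boldsymbol{P_n}\, l(\boldsymbol{R_n})}\right],
\]
I would first invoke the independence between the marks $\{\boldsymbol{P_n}\}$ and the point process $\Phi$, and use the i.i.d.\ property of the $\boldsymbol{P_n}$ to push the mark-expectation inside the product, replacing each factor by $\varphi_{\boldsymbol{P}}(\omega l(r))$ with $r = \|X_n\|$.

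Next I would apply the PGFL of the homogeneous PPP of intensity $\lambda$, which for a measurable function $f$ with $0 \le f \le 1$ gives $\mathbb{E}\!\left[\prod_{X_n \in \Phi} f(X_n)\right] = \exp\!\left(-\lambda \int_{\mathbb{R}^2}(1-f(x))\,\mathrm{d}x\right)$. The subtle point here is the domain of integration: since the user is served by its nearest BS $X_o$, conditioning on the contact distance $\boldsymbol{R_o}$ removes the ball $b(o,\boldsymbol{R_o})$ from the domain, and by the independence property of the Poisson process the remaining interferers form a PPP of intensity $\lambda$ on $\mathbb{R}^2 \setminus b(o,\boldsymbol{R_o})$. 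Applying the PGFL to the complement region then yields
\[
\varphi_{\boldsymbol{I}}(\omega) = \exp\!\left(-\lambda \int_{\mathbb{R}^2 \setminus b(o,\boldsymbol{R_o})} \bigl[1 - \varphi_{\boldsymbol{P}}(\omega l(x))\bigr]\,\mathrm{d}x\right).
\]

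Finally, since $l(x) = \|x\|^{-\alpha}$ depends only on the radius, a change to polar coordinates collapses the angular integral to a factor of $2\pi$, producing
\[
\varphi_{\boldsymbol{I}}(\omega) = \exp\!\left(-2\pi\lambda \int_{\boldsymbol{R_o}}^{\infty} \bigl[1 - \varphi_{\boldsymbol{P}}(\omega l(r))\bigr] r\,\mathrm{d}r\right) = \exp(-2\pi\lambda\beta),
\]
which is the claimed form with $\beta(\omega)$ as in \eqref{Eq_Beta}.

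The main obstacle I anticipate is justifying the exclusion of the ball $b(o,\boldsymbol{R_o})$ from the integration domain cleanly. The cellular association rule makes the interferer process a PPP conditioned on having no points in $b(o,\boldsymbol{R_o})$; by the independent scattering property of Poisson processes this conditional law coincides with an unconditional PPP of the same intensity on the complement, so the PGFL may be applied without modification. The other routine checks I would record for completeness are that the integrand $1 - \varphi_{\boldsymbol{P}}(\omega l(r))$ is integrable on $[\boldsymbol{R_o},\infty)$ under the usual assumption $\alpha > 2$, so that $\beta(\omega)$ is finite, and that Fubini applies when interchanging the mark expectation with the product over the point process.
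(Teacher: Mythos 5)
Your proposal is correct and follows essentially the same route as the paper's proof: pushing the mark expectation of $\boldsymbol{P}$ inside the product by independence and then applying the PGFL of the homogeneous PPP over the region outside the ball $b(o,\boldsymbol{R_o})$, with the polar-coordinate reduction giving the factor $2\pi\lambda$. Your additional remarks on why the conditioning on the contact distance leaves an unconditional PPP on the complement, and on the integrability of $1-\varphi_{\boldsymbol{P}}(\omega l(r))$ for $\alpha>2$, are points the paper leaves implicit, so they only strengthen the argument.
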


\begin{proof}
We start from the definition of the characteristic function of the interference, where the expectation should be performed over (i) the stochastic processes in $\boldsymbol{P}$ and (ii) over the geometrical stochastic process of $\Phi_\mathcal{I}$:

\begin{align}\label{Eq_Derivation_1}
&\varphi_{\boldsymbol{I}}(\omega) = \mathbb{E}  \left[ e^{\jmath\omega\boldsymbol{I}} \right] = \mathbb{E}_{\Phi_\mathcal{I}}  \mathbb{E}_{\boldsymbol{P}} \left[ \exp\left(\jmath\omega\sum_{\Phi_\mathcal{I}} \boldsymbol{P_n}l(\boldsymbol{R_n}) \right) \right] \nonumber \\
&= \mathbb{E}_{\Phi_\mathcal{I}}  \mathbb{E}_{\boldsymbol{P}} \left[ \prod_{\Phi_\mathcal{I}} \exp\left(\jmath\omega\boldsymbol{P_n}l(\boldsymbol{R_n}) \right) \right] \nonumber \\
&\stackrel{(a)}{=} \mathbb{E}_{\Phi_\mathcal{I}} \left[ \prod_{\Phi_\mathcal{I}}  \mathbb{E}_{\boldsymbol{P}} \left[ \exp\left(\jmath\omega\boldsymbol{P_n}l(\boldsymbol{R_n}) \right) \right] \right] \stackrel{(b)}{=} \mathbb{E}_{\Phi_\mathcal{I}} \left[ \prod_{\Phi_\mathcal{I}}   \varphi_{ \boldsymbol{P} } \left(\omega l(\boldsymbol{R_n}) \right)  \right],
\end{align}
where (a) follows from the fact that the combined channel process $\boldsymbol{P}=P_o \boldsymbol{g} \boldsymbol{h}$ is independent of the geometrical process and (b) follows directly from the definition of the CF.
Now we can apply the probability generating functional of a homogeneous PPP on $\mathbb{R}^2$ \cite{Book_Haenggi}, where for a function $f(x)$ the following relation is satisfied:
\begin{equation}\label{Eq_PGFL_PPP}
\mathbb{E}\left[ \prod_{\Phi_\mathcal{I}} f(x) \right] = \exp\left(   - 2\pi \lambda \int_{\mathcal{I}} [1-f(x)] r\mathrm{d}r  \right).
\end{equation}
The integration variable $r \in \mathcal{I}$ is the distance range where the active interferers are located, that is $\mathcal{I} = (\boldsymbol{R_o},\infty)$. Accordingly, we can write (\ref{Eq_Derivation_1}) as the following:
\begin{equation}
\varphi_{\boldsymbol{I}}(\omega) =  \exp\left(     - 2\pi \lambda \int_{\boldsymbol{R_o}}^{\infty} \left[ 1- \varphi_{ \boldsymbol{P} } \left(\omega l(r) \right) \right] r\mathrm{d}r \right).
\end{equation}
Hence, Proposition 1 is proved.
\end{proof}
For a log-distance mean path-loss model, (\ref{Eq_Beta}) yields:
\begin{align}
\beta(\omega)&=\int_{\boldsymbol{R_o}}^{\infty} \left[ 1- \varphi_{ \boldsymbol{P} } \left(\omega r^{-\alpha}  \right) r\mathrm{d}r  \right], \text{ noting that}
\end{align}
\begin{align}
\varphi_{ \boldsymbol{P} } \left(\omega r^{-\alpha}  \right) &= \mathbb{E}_{\boldsymbol{g},\boldsymbol{h}} \left[ \exp\left(\jmath \omega r^{-\alpha} P_o  \boldsymbol{g}\boldsymbol{h} \right)\right].
\end{align}
Accordingly, we can rewrite $\beta$ as the following:
\begingroup\makeatletter\def\f@size{8.5}\check@mathfonts
\begin{align} \label{Eq_Beta_2}
\beta(\omega) &=\mathbb{E}_{\boldsymbol{g},\boldsymbol{h}} \left[ \int_{\boldsymbol{R_o}}^{\infty} \left[ 1- \exp \left(   \jmath \omega r^{-\alpha} P_o \boldsymbol{g}\boldsymbol{h} \right) \right]
\right] r\mathrm{d}r   &&\nonumber \\
              &=\mathbb{E}_{\boldsymbol{g},\boldsymbol{h}} \left[-\frac{\boldsymbol{R_o}^2}{2}+\frac{(-\jmath P_o \boldsymbol{g}\boldsymbol{h}\omega )^{2/\alpha } }{\alpha} \left[ \vphantom{\Gamma \left(\frac{-2}{\alpha }\right)} 
               \Gamma \left(\frac{-2}{\alpha },-\jmath P_o \boldsymbol{g}\boldsymbol{h}\boldsymbol{R_o}^{-\alpha } \omega \right)-\Gamma \left(\frac{-2}{\alpha }\right) \right] \right],
\end{align}
\endgroup
where $\Gamma(.)$ and $\Gamma(.,.)$ are the Gamma and the incomplete Gamma functions respectively.

\section{Modeling Coverage Probability}\label{Sec_Coverage}
Signal to Interference and Noise Ratio (SINR) is an important measure that can determine the link throughput and the availability of the wireless service. The SINR represents the strength of the target signal compared to the counterpart interferers' combined power plus the thermal noise generated inside the receiver's electronics. The latter can be represented as an Additive White Gaussian Noise (AWGN). The SINR is expressed as:    $\mathrm{SINR} = \frac {\boldsymbol{S}} {\boldsymbol{I}+W}$, where $\boldsymbol{S} = P_o \boldsymbol{h_o} \boldsymbol{g_o}l(\boldsymbol{R_o})$ is the desired signal which carries the needed information from the serving BS, $\boldsymbol{I}$ is the aggregate interference power, and $W$ is the AWGN noise power. The random variables $\boldsymbol{h_o}$ and $\boldsymbol{g_o}$ model the fast and slow fading respectively of the serving BS channel.

\begin{Theorem}
The probability that a receiver to be covered by a certain level of cellular wireless service is given by:
\begin{equation}\label{Eq_Main}
p_c=\mathbb{E}_{\boldsymbol{h_o,g_o}} \left[\int_{r>0} F_{\boldsymbol{I}} \left( \frac{P_o \boldsymbol{h_o}\boldsymbol{g_o}}{Tr^\alpha}-W \right) f_{\boldsymbol{R_o}}(r) \mathrm{d}r \right].
\end{equation}
\end{Theorem}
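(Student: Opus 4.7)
The plan is to compute $p_c=\mathbb{P}(\mathrm{SINR}>T)$ by conditioning carefully on the random quantities that describe the serving link, using independence to isolate the interference term, and then recognizing what remains as the CDF of $\boldsymbol{I}$ evaluated at the appropriate threshold.

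First I would rewrite the coverage event in terms of $\boldsymbol{I}$. Because $\boldsymbol{I},W\ge 0$, the inequality $\boldsymbol{S}/(\boldsymbol{I}+W)>T$ is equivalent to $\boldsymbol{I}<\boldsymbol{S}/T-W$. Substituting $\boldsymbol{S}=P_o\boldsymbol{h_o}\boldsymbol{g_o}\,l(\boldsymbol{R_o})=P_o\boldsymbol{h_o}\boldsymbol{g_o}\boldsymbol{R_o}^{-\alpha}$ turns the coverage event into
\[
\boldsymbol{I}<\frac{P_o\boldsymbol{h_o}\boldsymbol{g_o}}{T\boldsymbol{R_o}^{\alpha}}-W .
\]

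Next I would condition on the triple $(\boldsymbol{h_o},\boldsymbol{g_o},\boldsymbol{R_o})$. By the model assumptions, the serving link fading $(\boldsymbol{h_o},\boldsymbol{g_o})$ is independent of the point process $\Phi_{\mathcal{I}}$ and of all interferer fadings, so given $(\boldsymbol{h_o},\boldsymbol{g_o},\boldsymbol{R_o})=(h,g,r)$ the threshold on the right hand side is deterministic and the conditional probability of the coverage event is simply the CDF of $\boldsymbol{I}$ (conditioned on $\boldsymbol{R_o}=r$, whose characteristic function is the one derived in Proposition~1) evaluated at that threshold. Integrating against the density $f_{\boldsymbol{R_o}}$ of the contact distance of the PPP and then averaging over $(\boldsymbol{h_o},\boldsymbol{g_o})$ gives
\[
p_c=\mathbb{E}_{\boldsymbol{h_o},\boldsymbol{g_o}}\!\left[\int_{r>0} F_{\boldsymbol{I}}\!\left(\frac{P_o\boldsymbol{h_o}\boldsymbol{g_o}}{Tr^{\alpha}}-W\right)f_{\boldsymbol{R_o}}(r)\,\mathrm{d}r\right],
\]
which is the claim. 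Fubini justifies the order of integration since the integrand is bounded in $[0,1]$.

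The main subtlety—rather than any computation—is the implicit dependence of $F_{\boldsymbol{I}}$ on $\boldsymbol{R_o}$. The symbol $F_{\boldsymbol{I}}$ in the theorem must be read as the \emph{conditional} CDF of the interference given $\boldsymbol{R_o}=r$, because the exclusion ball $b(o,\boldsymbol{R_o})$ appears in the lower limit of $\beta(\omega)$ in Proposition~1. Handling this coupling correctly, and invoking the PPP contact distribution $f_{\boldsymbol{R_o}}(r)=2\pi\lambda r\exp(-\pi\lambda r^2)$ only at the unconditioning stage, is the only place where care is needed; once that point is clear the result follows directly from the definition of SINR coverage.
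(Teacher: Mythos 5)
Your proposal is correct and follows essentially the same route as the paper: express the coverage event as $\boldsymbol{I}<S/T-W$, identify the conditional probability as $F_{\boldsymbol{I}}$ evaluated at that threshold, and then average over $\boldsymbol{h_o},\boldsymbol{g_o}$ and the contact distance $\boldsymbol{R_o}$ with density $f_{\boldsymbol{R_o}}(r)=2\pi\lambda r e^{-\pi\lambda r^2}$. Your explicit remark that $F_{\boldsymbol{I}}$ must be read as the conditional CDF given $\boldsymbol{R_o}=r$ (through the exclusion ball in Proposition~1) is a point the paper leaves implicit, and is a worthwhile clarification rather than a deviation.
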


\begin{proof}

The outage of wireless service occurs when the SINR level at a receiver falls below a threshold $T$. Accordingly, we can express the link-level success probability at a given distance $R_o$ and a given $h_o$, $g_o$ as:
\begin{align}\label{Eq_Ps}
p_L  =\mathbb{P}[\mathrm{SINR} \geq T | h_o, g_o,R_o] =\mathbb{P}[\boldsymbol{I} \le \frac{S}{T}-W] =F_{\boldsymbol{I}}\left( \frac{S}{T}-W\right),
\end{align}
where $F_{\boldsymbol{I}}$ is the cumulative distribution function (CDF) of the interference $\boldsymbol{I}$. Thus, the success probability is found by averaging over $\boldsymbol{R_o}$, $\boldsymbol{h_o}$, and $\boldsymbol{g_o}$ as:
\begin{equation}
p_c = \mathbb{E}_{\boldsymbol{R_o}, \boldsymbol{h_o}, \boldsymbol{g_o}} \left[p_L\right] = \mathbb{E}_{\boldsymbol{h_o}, \boldsymbol{g_o}} \left[ \mathbb{E}_{\boldsymbol{R_o}} \left[p_L\right] \right].
\end{equation}
Accordingly, we can obtain the result in (\ref{Eq_Main}) by applying the expectation rule over the contact distance $\boldsymbol{R_o}$, having a probability density function of $f_{\boldsymbol{R_o}}(r)=2\lambda \pi r \exp\left( -\lambda \pi r^2\right)$ in a PPP cellular network. We should note that $F_{\boldsymbol{I}}$ is computed using Gil-Pelaez's inversion formula given in (\ref{Eq_Inversion_Theorem}).
\end{proof}

We call (\ref{Eq_Main}) the \emph{coverage equation}, constituting the main result of this work allowing the evaluation of the averaged network-level success probability. The various dynamics affecting the cellular service success probability are visualized in Fig. \ref{Fig_Coverage_Equation}, namely (i) the base station density, (ii) the common base stations' power, (iii) the path-loss model, (iv) slow fading model, (v) fast fading model, (vi) noise level, and finally (vii) the target SINR threshold.

\begin{figure}
  \centering
  \includegraphics[width=3.00in]{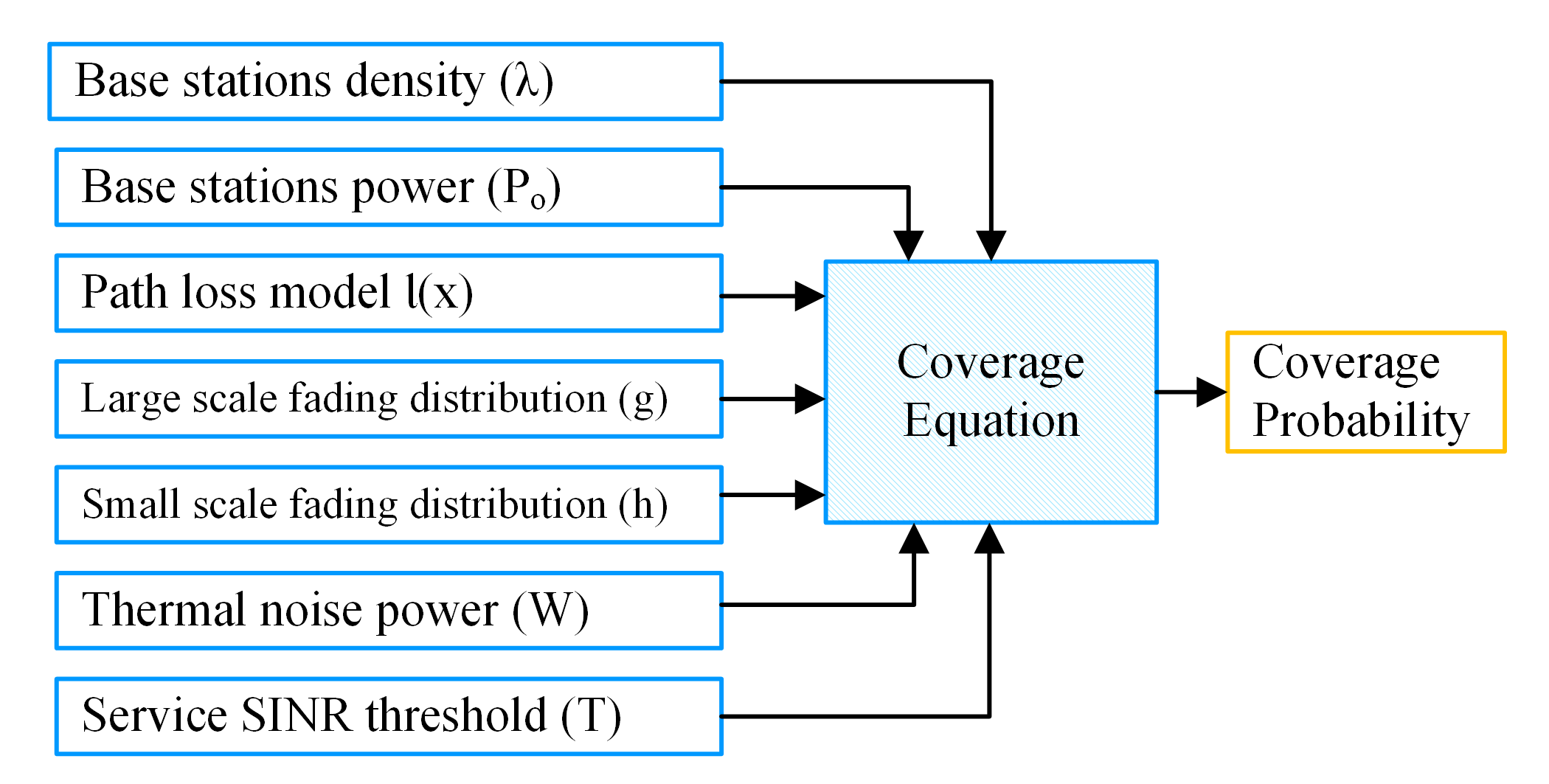}
  \caption{The different dynamics affecting the coverage probability. }\label{Fig_Coverage_Equation}
\end{figure}

\section{Network Performance Analysis}\label{Sec_Scenarios}
In this section, we first experiment a channel affected by a path-loss only, then we account for the fast-fading impairing both the serving BS signal and the interfering signals.

\subsection{Path-Loss Only Scenario}
By taking the effect of the path-loss only (i.e. $h=1$ and $g=1$) we can produce an initial understanding of the cellular coverage dynamics, representing the upper bound of the channel performance. According to (\ref{Eq_Beta_2}), $\beta$ can be reduced to:
\begin{flalign}
\beta(\omega) =-\frac{\boldsymbol{R_o}^2}{2}+\frac{(-\jmath P_o \omega )^{2/\alpha } }{\alpha} \left[ \vphantom{\Gamma \left(\frac{-2}{\alpha }\right)} \Gamma \left(\frac{-2}{\alpha },-\jmath P_o \boldsymbol{R_o}^{-\alpha } \omega \right)-\Gamma \left(\frac{-2}{\alpha }\right) \right].
\end{flalign}

We substitute different values of the path-loss exponent $\alpha$, assuming here that the cellular network is interference limited, so that the noise can be neglected (i.e., $W \to 0$). The results are counter-intuitive, as indicated in Fig. \ref{Fig_PathLoss_Scenario}, a better coverage in a random cellular network is achieved for higher values of the path-loss exponent, indicating that a heavier path-loss environment affects the aggregated interference more strongly than affecting the serving signal power.
\begin{figure}
  \centering
  \includegraphics[width=\ImagWidth]{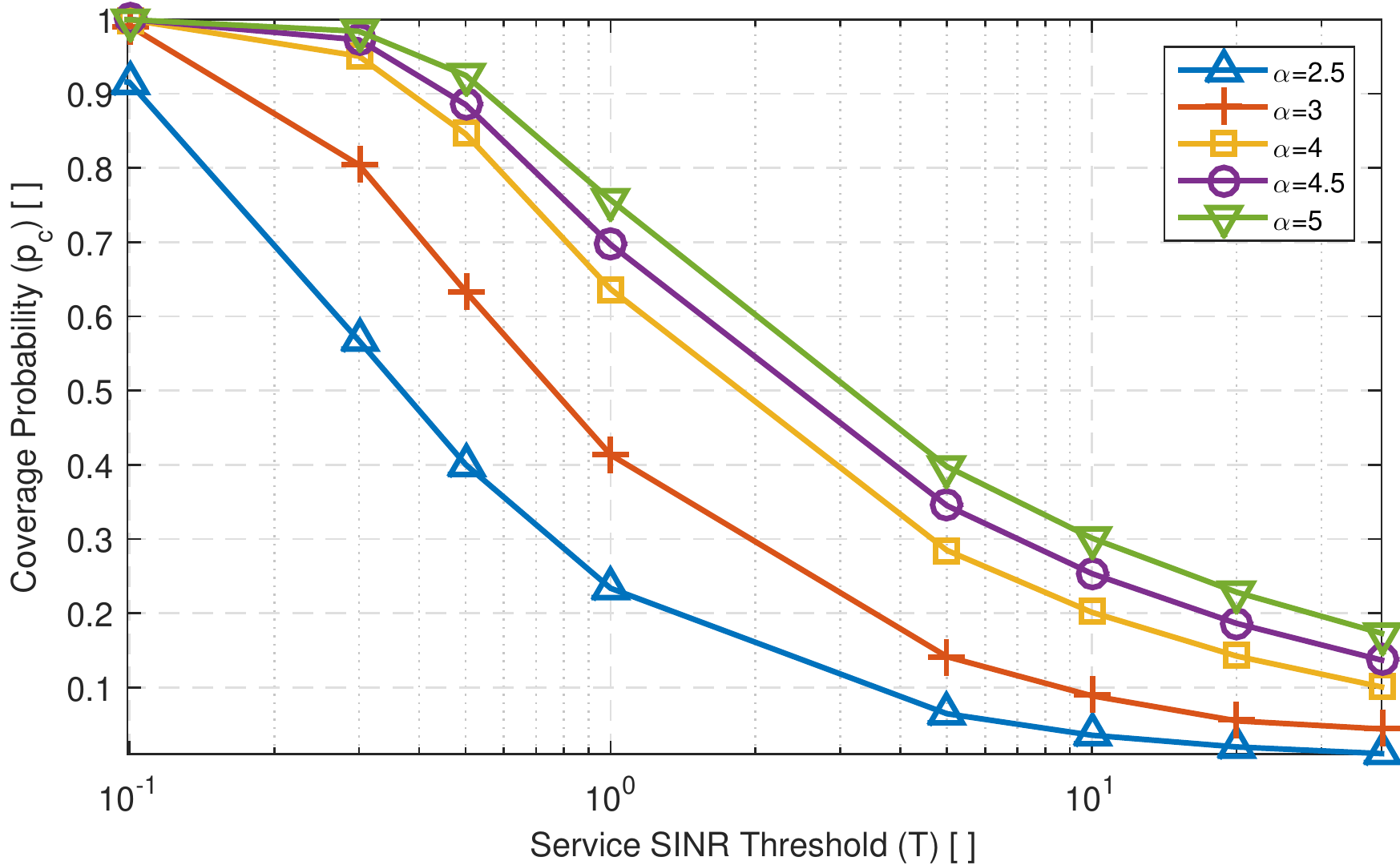}
  \caption{The coverage probability, when adopting log-distance path-loss channel model without considering fading effects.}\label{Fig_PathLoss_Scenario}
\end{figure}
Secondly, we compare the coverage probability for different intensities of BSs assuming a constant path-loss exponent. The results indicate that the BS intensity has insignificant effect on the coverage probability when the network is interference limited. Note that  the same observation was reported in \cite{Andrews_Capacity} and \cite{Japan_Ginbre} but for a Rayleigh fading channel.

\subsection{Rayleigh Fading Scenario} \label{Subsec_Scenario2}
In this scenario, we assume that both the serving and the interfering signals are impaired with Rayleigh fading. Namely, an exponential distribution random variable with a unity mean, where the probability density function (PDF) of $\boldsymbol{h}$ is given by $f_{\boldsymbol{h}}(x)= \exp(-x)$. Applying this to (\ref{Eq_Beta_2}), and by neglecting the shadowing variations (i.e. $g=1$), the result can be reduced to the following form:
\begingroup\makeatletter\def\f@size{8}\check@mathfonts
\begin{align}\label{Eq_Beta_Rayliegh}
\beta(\omega) &= -\frac{\boldsymbol{R_o}^2}{2} +  \frac{\pi}{\alpha} (-\jmath P_o \omega )^{2/\alpha }\csc \left(\frac{2 \pi }{\alpha }\right) + \jmath\frac{\boldsymbol{R_o}^{2 + \alpha}}{ (\alpha +2) P_o \omega} \, _2F_1\left(1,\frac{\alpha +2}{\alpha },\frac{2}{\alpha }+2,-\frac{\jmath \boldsymbol{R_o}^{\alpha }}{P_o \omega }\right),
\end{align}
\endgroup
where $_2F_1(.,.,.,.)$ is the hypergeometric function.

We perform numerical integration to calculate the coverage probability as per (\ref{Eq_Main}) for 5 different path-loss exponent values. We observe that higher path-loss exponent values have a favourable effect on the service success probability. However, Rayleigh scenario gives lower success probability than fading-less scenario, even though both the interference and the serving signals are affected by the same fading behaviour.
{\em We stress the point that the difference between the framework presented in this paper and the one in \cite{Andrews_Capacity} is the flexibility provided in choosing the fading model of the serving channel, so it is not limited to Rayleigh only}. This comes at the cost of more complex integral computation. We verify the results of the coverage equation with the ones obtained in \cite{Andrews_Capacity}, and depict the comparison in Fig. \ref{Fig_Rayleigh_Scenario}.

\begin{figure}
  \centering
  \includegraphics[width=\ImagWidth]{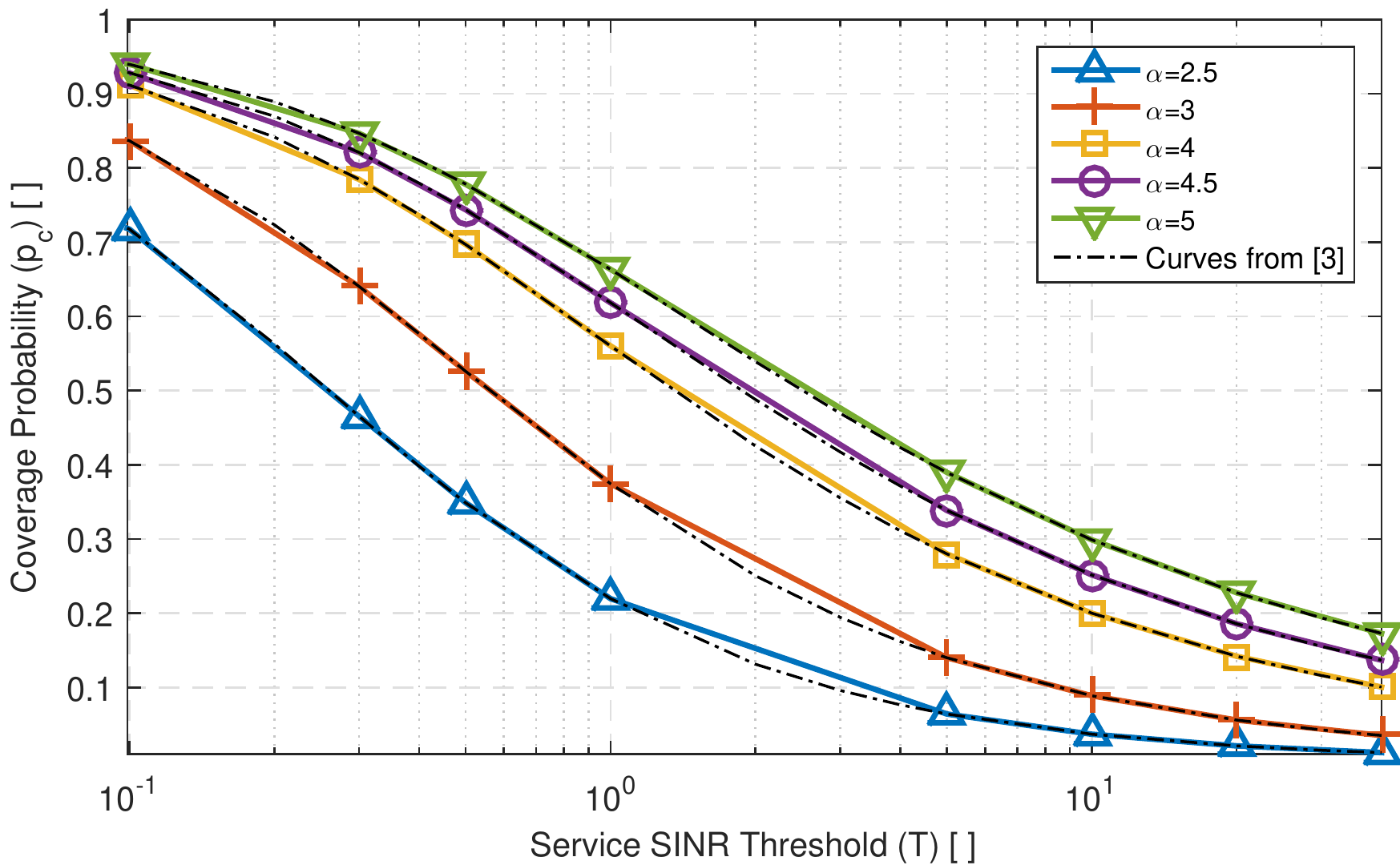}
  \caption{The coverage probability, for log-distance path-loss and Rayleigh fast-fading, showing a comparison with the results of \cite{Andrews_Capacity}.}\label{Fig_Rayleigh_Scenario}
\end{figure}

\subsection{Rayleigh Interferers with Rician Serving Signal}
Rician distributed fading can represent a wireless channel with more flexibility by tuning the $K$ factor which represents the ratio of the LoS  power to the sum of the powers from the defused multipath components. The probability distribution function describing a Rician fading channel gain is given as:
\begin{equation}
f_{\boldsymbol{h}}(x)=(K+1) e^{-x (K+1) - K } I_o \left(\sqrt{4x K (K+1)}\right),
\end{equation}
where $I_o(.)$ is the modified Bessel function of the first kind. In this scenario we assume that the serving signal follows a Rician distribution, while the interfering signals follow a Rayleigh distribution, thus $\beta(\omega)$ follows (\ref{Eq_Beta_Rayliegh}). The only difference will be in calculating the expectancy in the coverage equation (\ref{Eq_Main}), that is over a Rician distributed $\boldsymbol{h_o}$. The results are illustrated in Fig. \ref{Fig_Rician_Rayleigh_Scenario}, observing that the coverage probability is more sensitive to the distribution of the serving signal for lower SINR thresholds, a case which represents the edge users of the cell.

\begin{figure}
  \centering
  \includegraphics[width=\ImagWidth]{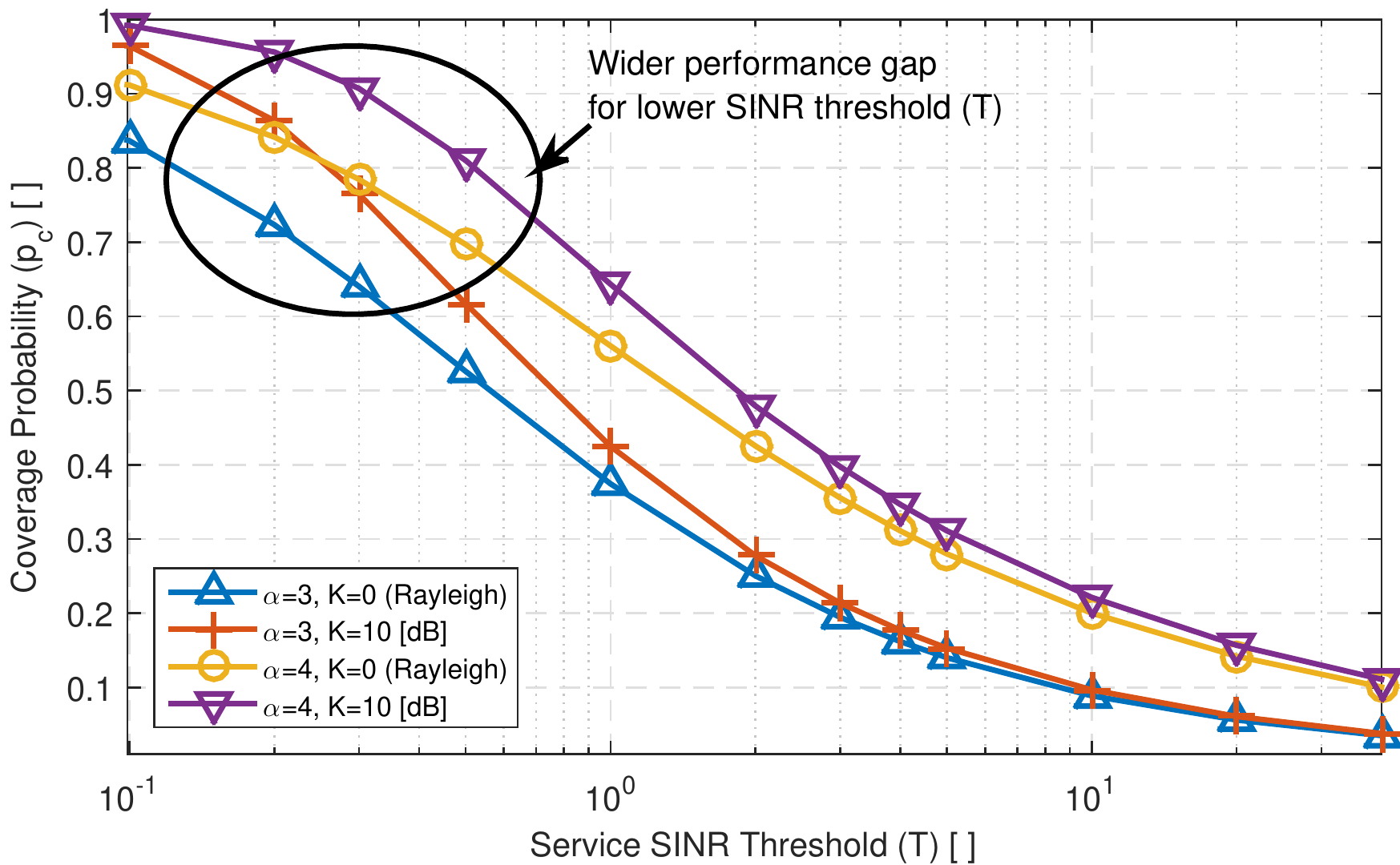}
  \caption{The coverage probability, for Rayleigh interfering signals and Rician serving signal.}\label{Fig_Rician_Rayleigh_Scenario}
\end{figure}

\subsection{Frequency Reuse} \label{Sec_FFR}
It can be clearly noticed how low is the SINR performance in the previous illustrated scenarios, also we note that increasing the density of base stations does not lead to a change in the coverage performance when the network is interference limited. Accordingly, resource management techniques should be applied to mitigate the co-channel interference between cells \cite{Zhang_Haenggi_Interference}. The spatial reuse of radio resource has always been the essence of cellular communication. However, there is a trade off between coverage performance and the spatial spectral efficiency. For example, when applying a frequency reuse scheme, the spectral efficiency will drop since the available spectrum for each cell will reduce. However a frequency reuse scheme will lessen the interference and boost the SINR performance.

In order to get a better insight of the expected network performance and how it is affected by radio resource coordination, we adopt the \emph{random frequency reuse scheme} due to its simplicity and tractability. In this scheme, base stations can choose from some $\Delta$ available radio frequencies. Thus each BS has a probability of a certain frequency assignment equal to $\frac{1}{\Delta}$. In this case, the co-channel interference will only be received from base stations utilizing the same frequency. The layout of the PVT will appear similar to Fig. \ref{Fig_Layout_FR}, where co-channel base stations are coloured the same. It is obvious that this channel assignment is not optimal, since co-channel cells are allowed to be mutual neighbours. But as mentioned before, the random frequency reuse greatly facilitates the mathematical analysis.
\begin{figure}
  \centering
  \includegraphics[width=\ImagWidth]{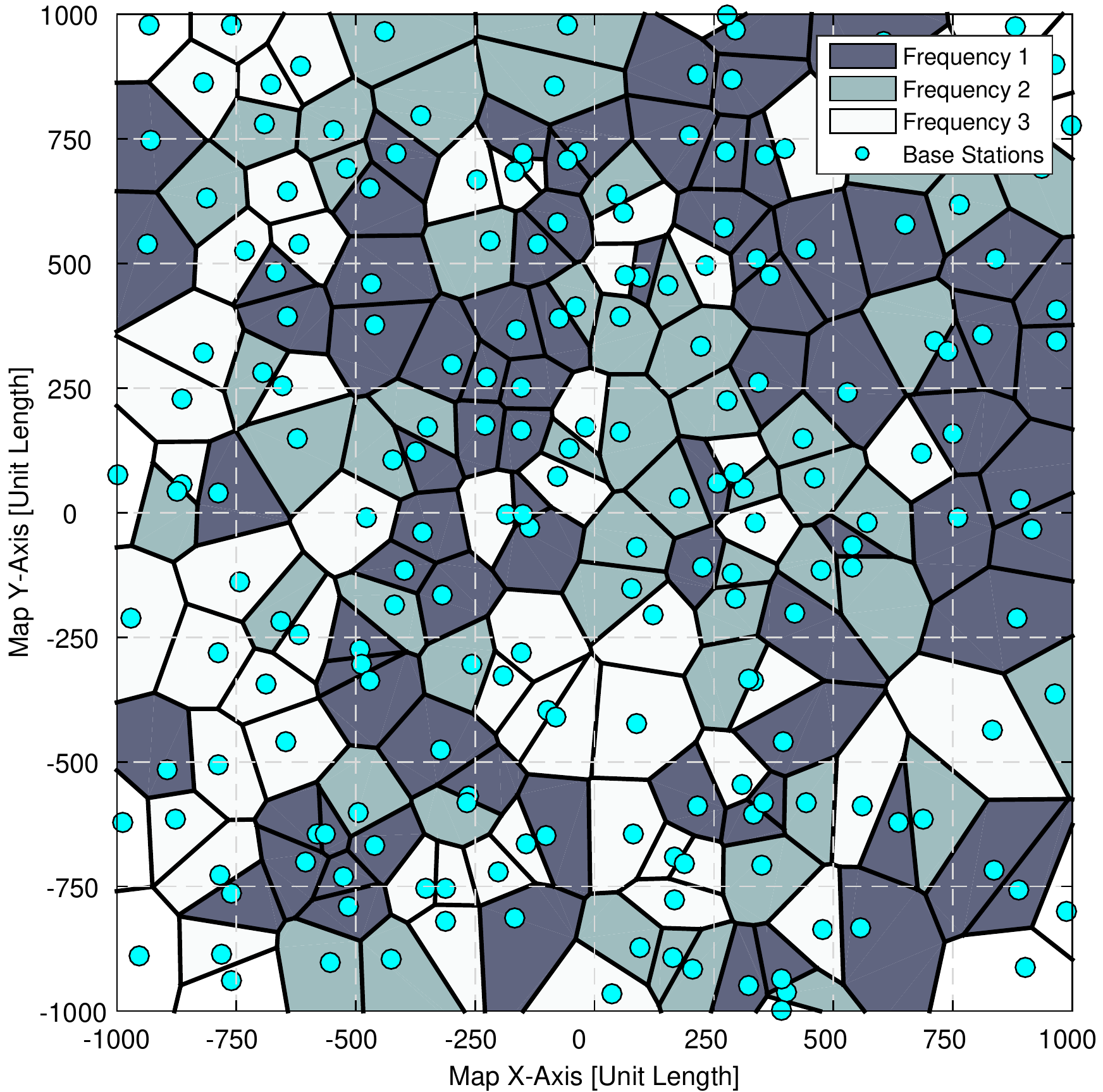}
  \caption{The layout of a random frequency reuse scheme in a Poisson Voronoi tessellation cellular network.}\label{Fig_Layout_FR}
\end{figure}

In Fig. \ref{Fig_FR} we plot the simulation results of the random frequency reuse scheme, showing the probability of the service success $p_c$ at an SINR threshold of $T=10$dB verses a range of reuse factor $\Delta$. An interesting observation is that the channel fading effect becomes more obvious for higher reuse factor, when comparing the three scenarios as explained in the subsections of Section \ref{Sec_Coverage}.
Simulation procedure will be explained in detail in Section \ref{Sec_Simulation}.

\begin{figure}
  \centering
  \includegraphics[width=\ImagWidth]{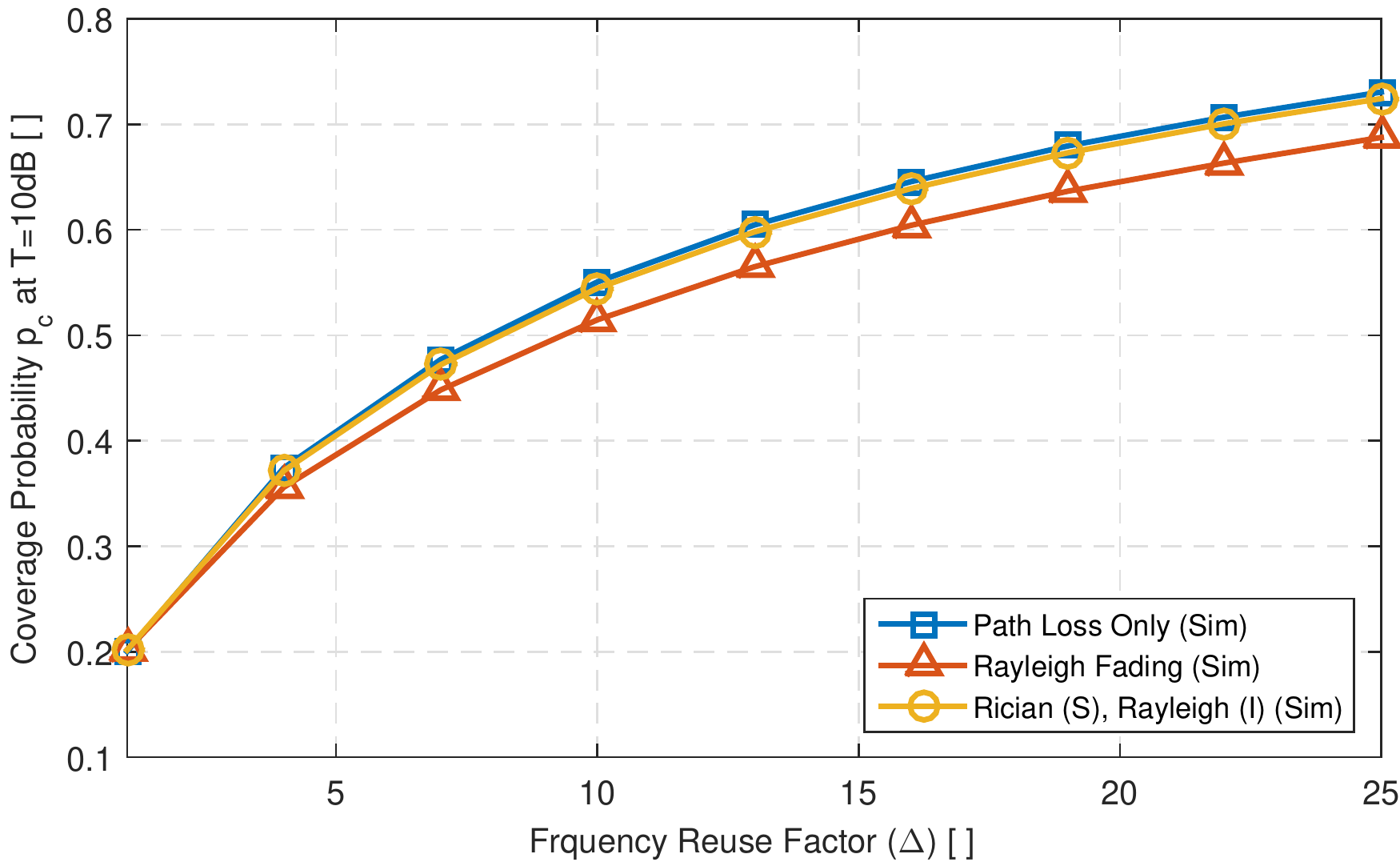}
  \caption{The effect of frequency reuse on the service success probability at an SINR threshold of $T=10$dB. where (S) refers to the serving signal and (I) refers to the interfering signals.}\label{Fig_FR}
\end{figure}

To analytically study the effect of random frequency reuse, we examine the new distribution of the interferers, noting that it follows a new PPP, since the independent thinning of a PPP will also yield a PPP \cite{Book_Baccelli_1} with a new intensity equal to $\frac{\lambda}{\Delta}$. In this case the coverage equation (\ref{Eq_Main}) resulted from Theorem 1 still holds, except that the aggregated interference follows a different stochastic distribution; the new CDF of the interference is denoted $\widehat{F_{\boldsymbol{I}}}$ so that its characteristic function is given by the following:
\begin{equation}
\widehat{\varphi_{\boldsymbol{I}}}(\omega) =  \exp\left(     - 2\pi \frac{\lambda}{\Delta} \beta \right),
\end{equation}
resulting in a reduced interference effect, thus a better coverage success probability.

\section{Network Rate} \label{Sec_Capacity}
A practical system would perform less than the maximum limit set by Shannon capacity theorem. Accordingly we utilize a practical method \cite{NSN_LTE_Capacity} to estimate the user's throughput (data rate) per/Hz following the  expression: $\rho = \ln \left( 1+\frac{\boldsymbol{\mathrm{SINR}}}{\mathrm{SINR}_o} \right)$, where $\mathrm{SINR}_o$ is a system specific parameter considered here as a constant, representing the gap between Shannon limit and the achievable rate of the system. The estimated network rate can be calculated as:
\begin{equation}\label{Eq_Rate_1}
\rho_c = \mathbb{E} [ \rho ] = \mathbb{E} \left[ \ln \left(1+ \frac{\boldsymbol{\gamma}}{\gamma_o} \right) \right] = \int_{\gamma_{\mathrm{min}}}^{\infty} \left[ \ln \left(1+\frac{\boldsymbol{v}}{\gamma_o}\right) \right] f_{\boldsymbol{\gamma}} (v) \mathrm{d}v,
\end{equation}
where the symbol $\boldsymbol{\gamma}$ is used  to represent the SINR, $v$ is merely the integration variable, while ${\boldsymbol{\gamma}_{\mathrm{min}}}$ is the effective SINR limit where no useful communication can take place below ${\boldsymbol{\gamma}_{\mathrm{min}}}$. This proposed method can capture the effect of adaptive modulation and coding schemes in communication systems. This method requires obtaining the PDF of the SINR, i.e., $f_{\boldsymbol{\gamma}}$, which can be approximated numerically from the points calculated in plotting the service success probability presented earlier, since $f_{\boldsymbol{\gamma}}=-\frac{\mathrm{d}}{\mathrm{d}v} F^c_{\boldsymbol{\gamma}}(v)$, where $F^c_{\boldsymbol{\gamma}}$ is the complementary cumulative distribution function (CCDF) of the SINR as plotted in Figs. \ref{Fig_PathLoss_Scenario}-\ref{Fig_Rician_Rayleigh_Scenario}. By means of trapezoidal integration, the resulting rate is presented in Fig. \ref{Fig_Cellular_Rate}, showing the effect of ${\boldsymbol{\gamma}_{\mathrm{min}}}$ on the network performance, where low influence of $\gamma_\mathrm{min}$ can be noted at the high extent of the $\gamma_\mathrm{min}$-axis.

\begin{figure}
 \centering
  \includegraphics[width=\linewidth]{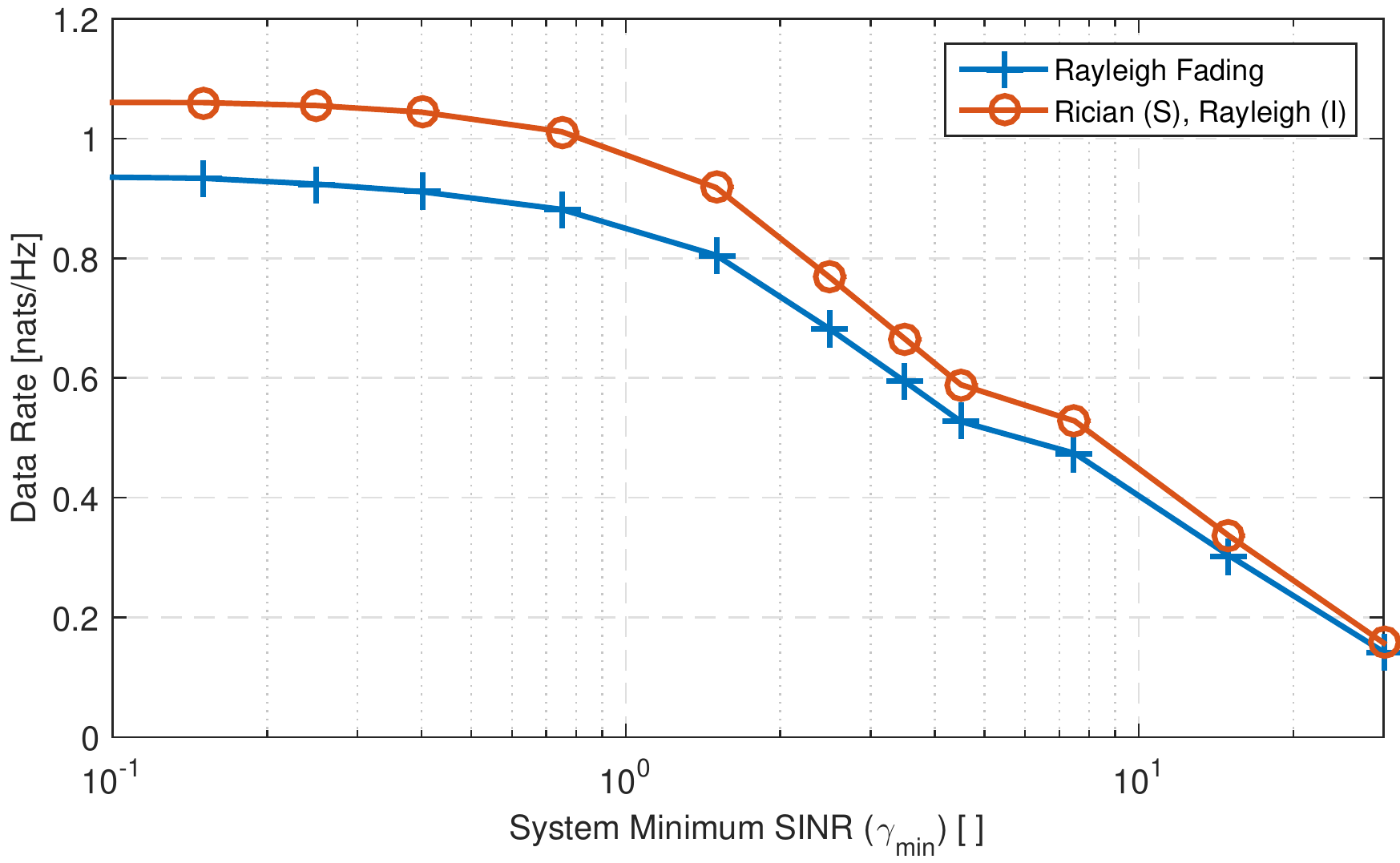}
  \caption{The expected network rate obtained using trapezoidal integration of equation (\ref{Eq_Rate_1}), versus system's minimum SINR $\gamma_\mathrm{min}$ and $\gamma_o=1$. (S) refers to the serving signal and (I) refers to the interfering signals.}\label{Fig_Cellular_Rate}
\end{figure}

Alternatively, at the cost of additional computational efforts we can utilize the following theorem to calculate the expected rate directly from the system's parameters:

\begin{Theorem}
The expected rate of a cellular network can be calculated from the following expression:
\begin{equation}\label{Eq_Capacity}
\rho_c  = \mathbb{E}_{\boldsymbol{h_o,g_o}} \left[\int_{r>0} \int_{\rho_{\mathrm{min}}}^{\infty} F_{\boldsymbol{I}} \left( \frac{P_o \boldsymbol{h_o}\boldsymbol{g_o}}{\gamma_o(e^v-1)r^\alpha}-W \right)  f_{\boldsymbol{R_o}}(r) \mathrm{d}v \mathrm{d}r \right].
\end{equation}
\end{Theorem}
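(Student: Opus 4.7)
The plan is to express the expected rate as an integral of coverage probabilities over achievable rate levels and then substitute the closed form from Theorem 1. Starting from the definition $\rho_c = \mathbb{E}[\ln(1 + \boldsymbol{\gamma}/\gamma_o)]$, with the convention that no rate is delivered when $\boldsymbol{\gamma} < \gamma_{\min}$, I apply the layer-cake identity $\mathbb{E}[Y] = \int \mathbb{P}(Y > v)\,\mathrm{d}v$ for non-negative $Y$, or equivalently change variables $v = \ln(1 + y/\gamma_o)$ in $\int \ln(1+y/\gamma_o) f_{\boldsymbol{\gamma}}(y)\,\mathrm{d}y$. This yields
\[
\rho_c = \int_{\rho_{\min}}^{\infty} \mathbb{P}\!\left(\boldsymbol{\gamma} > \gamma_o(e^v - 1)\right) \mathrm{d}v,
\]
where $\rho_{\min} = \ln(1 + \gamma_{\min}/\gamma_o)$ is the smallest rate associated with the minimum usable SINR.

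The integrand is precisely the coverage probability at SINR threshold $T_v := \gamma_o(e^v - 1)$. Substituting Theorem 1 with $T_v$ in place of $T$ gives
\[
\mathbb{P}(\boldsymbol{\gamma} > T_v) = \mathbb{E}_{\boldsymbol{h_o},\boldsymbol{g_o}}\!\left[\int_{r>0} F_{\boldsymbol{I}}\!\left(\frac{P_o \boldsymbol{h_o}\boldsymbol{g_o}}{T_v\, r^\alpha} - W\right) f_{\boldsymbol{R_o}}(r)\,\mathrm{d}r\right].
\]
I then invoke Fubini--Tonelli to interchange the outer $v$-integral with the expectation over $(\boldsymbol{h_o},\boldsymbol{g_o})$ and the radial integral over $r$, collecting everything inside one nested integral and arriving at (\ref{Eq_Capacity}).

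The main obstacle I anticipate is justifying the interchange of the order of integration. The integrand $F_{\boldsymbol{I}}(\cdot)$ is non-negative and bounded by $1$, but the $v$-domain is infinite, so I must confirm that $F_{\boldsymbol{I}}\!\left(\frac{P_o \boldsymbol{h_o}\boldsymbol{g_o}}{\gamma_o(e^v-1) r^\alpha} - W\right)$ decays fast enough in $v$ for the joint integral to be finite; this follows from finiteness of the expected rate under the mild decay of the SINR CCDF. A secondary subtlety is to perform the change of variable carefully so that the lower limit in $v$ matches $\rho_{\min}$ and no additive boundary term survives; once the substitution is aligned with the definition of $\rho_{\min}$, the remainder of the derivation reduces to direct substitution of the expression from Theorem 1.
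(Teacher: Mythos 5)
Your proposal is correct and follows essentially the same route as the paper: the layer-cake identity $\mathbb{E}[X]=\int \mathbb{P}(X>v)\,\mathrm{d}v$ converts the expected log-rate into an integral of tail probabilities over $v\ge\rho_{\mathrm{min}}$, the event is rewritten as $\boldsymbol{I} < \frac{P_o \boldsymbol{h_o}\boldsymbol{g_o}}{\gamma_o(e^v-1)r^\alpha}-W$ to expose $F_{\boldsymbol{I}}$, and averaging over $\boldsymbol{R_o}$ with $f_{\boldsymbol{R_o}}$ gives (\ref{Eq_Capacity}). The only cosmetic difference is that you apply the layer cake unconditionally and then cite Theorem 1 with the $v$-dependent threshold $T_v=\gamma_o(e^v-1)$ plus a Fubini interchange, whereas the paper conditions on $(\boldsymbol{h_o},\boldsymbol{g_o},\boldsymbol{R_o})$ first and applies the identity to the inner expectation over $\boldsymbol{I}$ — the two orderings are equivalent.
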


\begin{proof}
We need to obtain the expectancy of the rate over three stochastic processes: (i) the spatial properties of the point process, (ii) the channel fading distributions of the serving signal, and (iii) the distribution of the interfering signals. Hence
\begin{align}
\rho_c &= \mathbb{E} \left[ \ln \left(1+ \frac{\boldsymbol{\gamma}}{\gamma_o} \right)\right] \nonumber \\
       &= \mathbb{E}_{\boldsymbol{h_o,g_o,R_o}} \left[ \mathbb{E}_{\boldsymbol{I}}  \left[   \ln \left(1+ \frac{P_o \boldsymbol{h_o g_o} \boldsymbol{R_o}^{-\alpha}}{\gamma_o (\boldsymbol{I}+W) } \right) \right]\right] \nonumber \\
       &\stackrel{(a)}{=} \mathbb{E}_{\boldsymbol{h_o,g_o,R_o}} \left[\int_{\rho_{\mathrm{min}}}^{\infty}  \mathbb{P}\left[ \ln \left(1+ \frac{P_o \boldsymbol{h_o g_o} R_o^{-\alpha}}{\gamma_o (\boldsymbol{I}+W)  } \right) > v \right]  \mathrm{d}v  \right]\nonumber \\
       &= \mathbb{E}_{\boldsymbol{h_o,g_o,R_o}} \left[\int_{\rho_{\mathrm{min}}}^{\infty}  \mathbb{P}\left[ \boldsymbol{I} <  \frac{P_o \boldsymbol{h_o}\boldsymbol{g_o}}{\gamma_o(e^v-1)R_o^\alpha}-W \right]  \mathrm{d}v  \right],
\end{align}
where $\rho_\mathrm{min} = \ln \left( 1+\frac{\gamma_\mathrm{min}}{\gamma_o} \right)$ is the minimum achievable data rates, $(a)$ follows from the fact that, for a positive random variable $X$,  $\mathbb{E}[X]=\int_0^\infty \mathbb{P} (X > v ) \mathrm{d}v$, a method which was also used in \cite{Andrews_Capacity}. 
The final result of the theorem follows from averaging over $\boldsymbol{R_o}$.
\end{proof}

\section{Simulation Results} \label{Sec_Simulation}
In order to verify the \emph{coverage equation} presented in Section \ref{Sec_Coverage}, we perform Monte-Carlo simulations for more than 6,500 users distributed uniformly within a PVT cellular network. The entire simulation scenario is repeated for 80 runs, where in each run we draw a large number of BS from a Poisson distributed random variable, and deploy these BSs homogeneously over the test map, the simulation results converge very quickly and simulating more that 20 runs has no visual effect on the output plot, where the sum of the total number of simulated links exceeds 500,000. The power received from all BSs are combined at every mobile user taking into consideration the stochastic effects of the radio channel by randomly generating the individual channel gain. The SINR value is stored for each receiver, and then we obtain the resulting empirical Complementary Cumulative Distribution Function (CCDF). Noting that the CCDF of the SINR is equivalent by definition to the coverage probability, because: ${F}_{\gamma}^c(T) = 1-F_{\gamma}(T) = \mathbb{P}(\gamma > T) =p_c$.

The three different scenarios explained in Section \ref{Sec_Coverage} are simulated, namely: in scenario (1) a deterministic channel is used for all BSs, where the path-loss follows log-distance law with exponent $\alpha=4$. In scenario (2) the effect of Rayleigh channel fading is added to all BSs including the serving BS. While in scenario (3) the serving BS is assumed to favour a Rician channel with a factor $K=10$ dB. The results of the simulation runs for all three scenarios are shown in Fig. \ref{Fig_Simulation}. We observe a close match between the analytical integration of the \emph{coverage equation} from one side and Monte-Carlo simulations from another side. Also we notice how the performance seems to be bounded by the deterministic channel conditions (fading-less scenario 1) as the upper bound and the Rayleigh fading (scenario 2) as the lower bound. An important conclusion can be drawn here that the performance is strongly dependent on the serving channel fading conditions rather than the interfering signals stochastic distribution. 

\begin{figure}
  \centering
  \includegraphics[width=\ImagWidth]{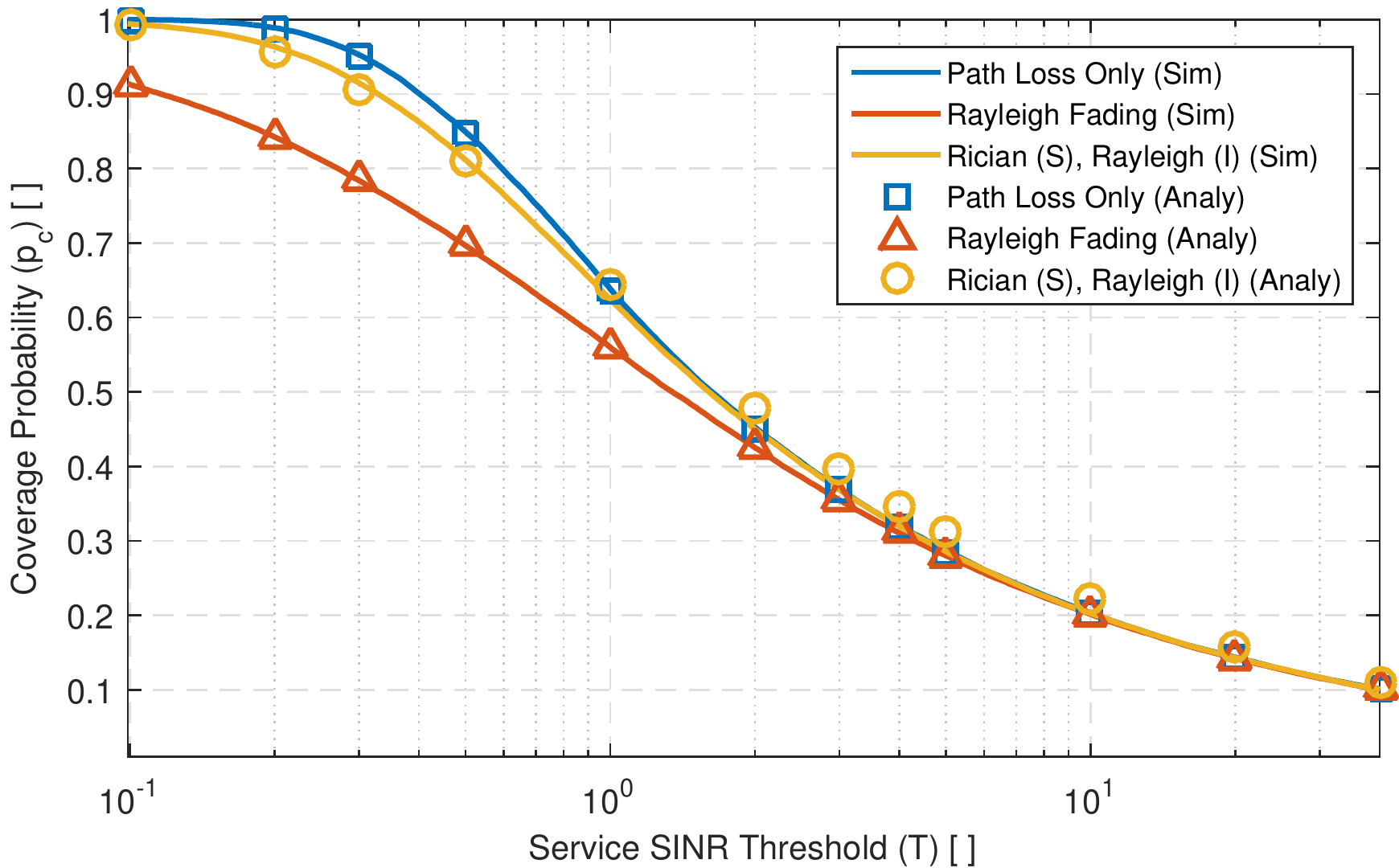}
  \caption{Coverage probability simulation results of the three proposed scenarios, compared to the analytical integration obtain using the \emph{coverage equation}. (S) refers to the serving signal and (I) refers to the interfering signals.}\label{Fig_Simulation}
\end{figure}

\section{Conclusion} \label{Sec_Conclution}
This paper has provided a mathematical framework to analytically compute the coverage and rate of random cellular networks under generic channel fading conditions. Two main observations have been made. Firstly, the stochastic process of the radio channel largely affects the coverage performance when considering lower SINR thresholds (e.g.,  the performance of cell edge users). Secondly, the density of base stations does not affect the cellular coverage when the network is interference limited, regardless of the stochastic process of the channel. For obtaining the expected rate of a cellular network two methods have been  illustrated. The first method is based on trapezoidal integration and the second method is based on a computable integration formula. Future work will include the modelling of different cellular interference coordination schemes under this framework.

\bibliography{Modelling_Capacity_Journal}

\end{document}